\newtheorem{thm}{Theorem}[section]
\newtheorem{lem}[thm]{Lemma}
\newtheorem{cor}[thm]{Corollary}
\newtheorem{prop}[thm]{Proposition}
\begin{document}
\title{\bf An improved algorithm for recognizing matroids}
\date{}
\maketitle
\begin{center}
\author{
{\bf Brahim Chaourar} \\
{ Department of Mathematics and Statistics,\\Imam Mohammad Ibn Saud  Islamic University (IMSIU) \\P.O. Box
90950, Riyadh 11623,  Saudi Arabia }\\{Correspondence address: P. O. Box 287574, Riyadh 11323, Saudi Arabia}\\{email: bchaourar@hotmail.com}}
\end{center}



\begin{abstract}
\noindent Let $M$ be a matroid defined on a finite set $E$ and $L\subset  E$. $L$ is locked in $M$ if $M|L$ and $M^*|(E\backslash L)$ are 2-connected, and $min\{r(L), r^*(E\backslash L)\} \geq 2$. Locked subsets characterize nontrivial facets of the bases polytope. In this paper, we give a new axiom system for matroids based on locked subsets. We deduce an algorithm for recognizing matroids improving the running time complexity of the best known till today. This algorithm induces a polynomial time algorithm for recognizing uniform matroids. This latter problem is intractable if we use an independence oracle.
\end{abstract}

\noindent {\bf2010 Mathematics Subject Classification:} Primary 05B35, Secondary 90C27, 52B40. \newline {\bf Key words and phrases:} matroid axioms; locked subsets; recognizing matroids; recognizing uniform matroids.

\section{Introduction}

Sets and their characteristic vectors will not be distinguished. We refer to Oxley \cite{Oxley 1992} and Schrijver \cite{Schrijver 1986} about matroids and polyhedra terminology and facts, respectively.
\newline Given a matroid $M$ defined on a finite set $E$. Suppose that $M$ (and $M^*$) is 2-connected. A subset $L\subset E$ is called a locked subset of $M$ if $M|L$ and $M^*|(E\backslash L)$ are 2-connected, and their corresponding ranks are at least 2, i.e., $min\{r(L), r^*(E\backslash L)\} \geq 2$. It is not difficult to see that if $L$ is locked then both $L$ and $E\backslash L$ are closed, respectively, in $M$ and $M^*$ (That is why we call it locked). We denote by $\mathcal{L}(M)$ and $\ell(M)$, respectively, the class of locked subsets of $M$ and its cardinality, which is called the locked number of $M$. For a disconnected matroid $M$, it is not difficult to see that the class of locked subsets of $M$ is the union of locked subsets of the 2-connected components of $M$. The locked structure of $M$ is the quadruple ($\mathcal P(M)$, $\mathcal S(M)$, $\mathcal L(M)$, $r$ ), where  $\mathcal P(M)$ and $\mathcal S(M)$ are, respectively, the class of parallel and coparallel closures, and $r$ is the rank function restricted to $\mathcal P(M)\cup \mathcal S(M)\cup \mathcal L(M)\cup \{\varnothing , E\}$. For $x\in \mathbb{R}^E$ and $Y\subseteq E$, $x(Y)=\sum_{e\in Y} x(e)$. For any class $\mathcal X\subseteq 2^E$, $\mathcal X^C=\{ E\backslash X$ such that $X\in \mathcal X\}$. We use the notations: $r(M)=r(E)$ and $r^*(M^*)=r^*(E)$.
\newline A matroid $M$ can be completely characterized by its locked structure through its bases polytope $BP(M)$ \cite{Chaourar 2018}:
\begin{thm} A minimal description of $BP(M)$ is the set of all $x\in \mathbb{R}^E$ satisfying:
$$     x(E)=r(E)                                                                    \eqno       $$
$$ x(P) \leq 1 \> \mathrm{for\>any\> parallel\> closure}\> P\subseteq E   \eqno  $$
$$ x(S) \geq |S|-–1  \> \mathrm{for\> any\> coparallel\> closure}\> S\subseteq E \eqno     $$
$$ x(L) \leq r(L)  \> \mathrm{for\> any\> locked\> subset}\> L\subseteq E   \eqno     $$
\end{thm}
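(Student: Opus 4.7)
The plan is to start from Edmonds' classical description
\[
BP(M) = \{x \in \mathbb{R}^E : x \geq 0,\ x(E) = r(E),\ x(A) \leq r(A)\ \text{for all}\ A \subseteq E\}
\]
and eliminate redundant inequalities in two passes. For the rank inequalities: (i) if $e \in \overline{A}\setminus A$ then $r(A\cup\{e\})=r(A)$, so $x(A)\le r(A)$ follows from the inequality for $A\cup\{e\}$ combined with $x(e)\ge 0$, reducing attention to flats; (ii) if a flat $F$ has $M|F$ disconnected with 2-connected components $F_1,\dots,F_k$, then $r(F)=\sum r(F_i)$ and $x(F)=\sum x(F_i)$, so $x(F)\le r(F)$ is a sum of the stronger inequalities on the $F_i$, reducing to 2-connected flats. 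Among surviving 2-connected flats $F$: the rank-$0$ cases are loops absorbed by non-negativity; the rank-$1$ cases are exactly parallel closures $P$, giving $x(P)\le 1$; and in the rank-$\ge 2$ case, whenever $M^*|(E\setminus F)$ is either disconnected or of rank $\le 1$, a symmetric argument on the dual side shows $x(F)\le r(F)$ is implied by $x(E)=r(E)$ together with coparallel-closure inequalities on $E\setminus F$ and locked inequalities on proper pieces, leaving exactly $\mathcal{L}(M)$.

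For the non-negativity constraints $x(e)\ge 0$, use the bijection $y=\mathbf{1}-x$ between $BP(M)$ and $BP(M^*)$, under which $x(e)\ge 0$ becomes $y(e)\le 1 = r^*(\{e\})$. Running the same reduction inside $M^*$ turns these into inequalities $y(Q)\le r^*(Q)$ for 2-connected flats $Q$ of $M^*$: the rank-$1$ parallel closures of $M^*$ give $y(P^*)\le 1$, i.e., $x(S)\ge |S|-1$ for coparallel closures $S$ of $M$, while the rank-$\ge 2$ locked subsets of $M^*$ coincide, via $L \mapsto E\setminus L$, with those of $M$ and contribute no new inequalities.

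Finally, minimality must be verified: each listed inequality defines a facet of $BP(M)$. For a locked $L$, bases of $M$ tight at $x(L)=r(L)$ are obtained by combining a basis of $M|L$ with an independent set of $M/L$ extending to a basis of $M$; the 2-connectedness of $M|L$ together with that of $M^*|(E\setminus L)$ supplies enough basis exchanges on both sides to produce $|E|-1$ affinely independent such vectors, matching $\dim BP(M)$. Analogous constructions handle parallel and coparallel closures. The main obstacle is the redundancy step for rank-$\ge 2$ flats: making explicit the conic combination that eliminates a 2-connected flat $F$ whenever $M^*|(E\setminus F)$ fails the 2-connected rank-$\ge 2$ condition requires carefully exploiting the self-duality between locked subsets of $M$ and $M^*$, and is the delicate core of the argument; the minimality direction then reduces largely to a 2-connected basis-exchange exercise.
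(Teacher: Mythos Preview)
In this paper the theorem is not proved directly: it is quoted from \cite{Chaourar 2018} in the introduction as background. The only argument the paper itself supplies is the remark following Theorem~2.6 that Section~2 constitutes ``a new proof for the bases polytope of a matroid and its facets based on the locked structure only.'' That route runs in the opposite direction from yours: it postulates the abstract locked axioms (L1)--(L17), defines $P(M)$ by the four constraint families, shows $P(M)$ coincides with the Edmonds-type polytope $Q(M)$ (Corollary~2.3), proves $P(M)$ is integral via an uncrossing argument on tight locked constraints (Lemma~2.4 and Theorem~2.5), and then verifies the basis-exchange axiom directly on the resulting $0/1$ vertices (Theorem~2.6). No flat-by-flat elimination starting from Edmonds' description is ever performed, and minimality---that each listed inequality is facet-defining---is not argued in Section~2 at all; that part is simply inherited from the cited reference.

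Your plan---start from Edmonds' description and prune redundancies down to parallel closures, coparallel closures, and locked subsets, handling the non-negativity and coparallel constraints via the involution $x\mapsto \mathbf 1-x$ with $BP(M^*)$---is the natural top-down argument and is essentially the approach of the cited source rather than of the present paper. The step you correctly flag as delicate, eliminating a $2$-connected flat $F$ of rank $\ge 2$ when $M^*|(E\setminus F)$ is disconnected or of dual rank $\le 1$, is indeed where the real work lies; the self-duality $L\leftrightarrow E\setminus L$ between $\mathcal L(M)$ and $\mathcal L(M^*)$ that you invoke is exactly the mechanism that makes it go through. What your approach buys is a direct route to the facet (minimality) claim, which the paper's Section~2 argument does not touch; what the paper's approach buys is an axiomatic characterisation that does not presuppose Edmonds' theorem.
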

This reflects the importance of the locked structure of a matroid. We give in this paper a new axiom system for defining matroids based on this quadruple. We deduce an improved algorithm for recognizing matroids. This problem is intractable (see \cite{Robinson and Welsh 1980}). A similar study has been done by Provan and Ball for testing if a given clutter $\Omega$, defined on a finite set $E$, is the class of the bases of a matroid \cite{Provan and Ball 1988}. They provide an algorithm with running time complexity $O(|\Omega|^3 |E|)$. Spinrad \cite{Spinrad 1991} improves the running time to $O(|\Omega|^2 |E|)$. In this paper, we give an algorithm for matroid recognition with running time complexity $O(|E|^2+|\mathcal L|^2+|E||\mathcal L|log|\mathcal L|)$. This improves the running time complexity of Spinrad's algorithm. Our algorithm becomes polynomial on $|E|$ for recognizing uniform matroids. Recognizing uniform matroids is intractable if we use an independence or an equivalent oracle \cite{Jensen and Korte 1982}.
\newline The remainder of this paper is organized as follows. In section 2, we give a new axiom system for defining matroids; then, in section 3, we give an improved algorithm for recognizing matroids which induces a polynomial time algorithm for recognizing uniform matroids. Finally, we will conclude in section 4.

\section{The Locked Axioms for a Matroid}

We will give a list of axioms for defining a matroid. Even if this list seems to be long but it simplifies many problems, for example recognizing matroids. Moreover, on the inverse of known axioms defining matroids, most of our axioms can be verified efficiently.
\newline Given a finite set $E$, the basic quadruple $M=(\mathcal P, \mathcal S, \mathcal L, r)$ is a locked system defined on $E$ if it verifies the followings:
\newline (L1) $E\neq \varnothing$,
\newline (L2) $\mathcal P$ and $\mathcal S$ are partitions of $E$,
\newline (L3) For any $(P, S)\in \mathcal P\times \mathcal S$, if $P\cap S\neq \varnothing$ then $|P|=1$ or $|S|=1$,
\newline (L4) $\mathcal L$ is a class of nonempty and proper subsets of $E$ such that $\mathcal L\cap \mathcal P=\mathcal L\cap \mathcal S=\varnothing$,
\newline (L5) For any $(X, L)\in (\mathcal P\cup \mathcal S)\times \mathcal L$, $X\cap L= \varnothing$ or $X\subset L$,
\newline (L6) $r$ is a nonegative function defined on $2^E$,
\newline (L7) $r(\varnothing)=0$ and $r(E)\geq r(X)$ for any $X\subseteq E$,
\newline (L8) $r(P)=min\{1, r(E)\}$ for any $P\in \mathcal P$,
\newline (L9) $r(E\backslash P)=min\{|E\backslash P|, r(E)\}$ for any $P\in \mathcal P$,
\newline (L10) $r(S)=min\{|S|, r(E)\}$ for any $S\in \mathcal S$,
\newline (L11) $r(E\backslash S)=min\{|E\backslash S|, r(E)+1-|S|\}$ for any $S\in \mathcal S$,
\newline (L12) $r(L)\geq max\{2, r(E)+2-|E\backslash L|\}$ for any $L\in \mathcal L$,
\newline (L13) $r$ is increasing on $\mathcal P\cup \mathcal L\cup \{\varnothing, E\}$, i.e., for any $(X, Y)\in (\mathcal P\cup \mathcal L\cup \{\varnothing, E\})^2$, $r(X)<r(Y)$ if $X\subset Y$,
\newline (L14) $r$ is submodular on $\mathcal P\cup \mathcal S\cup \mathcal L\cup \{\varnothing, E\}$, i.e., for any $(X, Y)\in (\mathcal P\cup \mathcal S\cup \mathcal L\cup \{\varnothing, E\})^2$, $r(X\cap Y)+r(X\cup Y)\leq r(X)+r(Y)$.
\newline (L15) For any $X\not \in \mathcal P\cup \mathcal P^C\cup \mathcal S\cup \mathcal S^C\cup \mathcal L\cup \{\varnothing, E\}$, one of the following holds (recursive property):
\begin{list}{}
{\leftmargin=0.5in \itemindent=0cm}
\item (P0) There exists $Y\in \mathcal P\cup \mathcal S$ such that $X\subset Y$ , $r(X)=1$ if $Y\in \mathcal P$ and $r(X)=|X|$ otherwise,
\item (P1) There exists $L\in \mathcal L$ such that $L\subset X$ , $r(X)=r(L)+r(X\backslash L)$, and $X\backslash L$ verifies (P1) or (P2),
\item (P2) There exists $P\in \mathcal P$ such that $P\cap X\neq \varnothing$ , $r(X)=r(P)+r(X\backslash P)$, and $X\backslash P$ verifies (P1) or (P2),
\item (P3) There exists $L\in \mathcal L$ such that $X\subset L$ , $r(X)=r(L)+r(X\cup (E\backslash L))-r(E)$, and $X\cup (E\backslash L)$ verifies (P3) or (P4),
\item (P4) There exists $S\in \mathcal S$ such that $(E\backslash S)\cup X\neq E$ , $r(X)=r(E\backslash S)+r(X\cup S)+|S\cap X|-r(E)$, and $X\cup S$ verifies (P3) or (P4),
\end{list}
(L16) For any $(L_1, L_2)\in \mathcal L^2$, if $L_1\cap L_2\neq \varnothing$ and $L_1\cap L_2\not \in \mathcal L$ then $L_1\cap L_2$ verifies (P0), (P1) or (P2) of (L15),
\newline (L17) For any $(L_1, L_2)\in \mathcal L^2$, if $L_1\cup L_2\neq E$ and $L_1\cup L_2\not \in \mathcal L$ then $L_1\cup L_2$ verifies (P3) or (P4) of (L15),
\newline Without loss of generality, we can replace axioms (L8)-(L11) by the following axioms respectively:
\newline (LL8) $r(P)=1$ for any $P\in \mathcal P$,
\newline (LL9) $r(E\backslash P)=r(E)$ for any $P\in \mathcal P$,
\newline (LL10) $r(S)=|S|$ for any $S\in \mathcal S$,
\newline (LL11) $r(E\backslash S)=r(E)+1-|S|$ for any $S\in \mathcal S$.
\newline Let us give the following polyhedron associated to the locked system $M$:
\newline  $P(M)$ is the set of all $x\in R^E$ satisfying the following constraints:
$$ x(E)=r(E)                                                                      \eqno  (1)$$
$$ x(P) \leq 1 \> \> \mathrm{for\>any}\> P\in \mathcal P   \eqno  (2)$$
$$ x(S) \geq |S|-–1  \> \mathrm{for\>any}\> S\in \mathcal S                    \eqno  (3)$$
$$ x(L) \leq r(L)  \> \mathrm{for\> any}\> L\in \mathcal L   \eqno  (4)$$
Now, we can start our process to prove the main theorem.
\begin{lem}
If $x\in P(M)$ then $$0\leq x(e) \leq 1 \> \> \mathrm{for\>any}\>e\in E \eqno  (5)$$
\end{lem}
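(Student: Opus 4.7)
The plan is to derive both bounds $0 \le x(e)$ and $x(e) \le 1$ directly from the partition structure (L2), the compatibility axiom (L3), and the two boxed constraints (2) and (3) of $P(M)$. Axiom (L2) provides, for every $e \in E$, a unique $P_e \in \mathcal{P}$ and a unique $S_e \in \mathcal{S}$ containing $e$, and (L3) forces at least one of $|P_e|, |S_e|$ to be $1$ whenever we pair an element with a class containing it. The whole argument is a short case split; the rank function $r$ and the locked subsets $\mathcal{L}$ are not needed at all.

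For the upper bound, I would split on $|P_e|$. If $|P_e|=1$, then $x(e) = x(P_e) \le 1$ is immediate from (2). If $|P_e| \ge 2$, then for every $f \in P_e \setminus \{e\}$ the element $f$ belongs to $P_e \cap S_f$, so (L3) applied to the pair $(P_e, S_f)$ forces $|S_f|=1$, i.e., $\{f\} \in \mathcal{S}$; constraint (3) then gives $x(f) \ge |\{f\}|-1 = 0$. Summing these inequalities yields $x(P_e \setminus \{e\}) \ge 0$, and (2) gives $x(e) = x(P_e) - x(P_e \setminus \{e\}) \le 1 - 0 = 1$.

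For the lower bound I would argue symmetrically, swapping the roles of $\mathcal{P}$ and $\mathcal{S}$ and of constraints (2) and (3). If $|S_e|=1$, then (3) gives $x(e) = x(S_e) \ge 0$ directly. Otherwise $|S_e| \ge 2$, and for each $f \in S_e \setminus \{e\}$ the pair $(P_f, S_e)$ forced by (L3) satisfies $|P_f|=1$, so (2) yields $x(f) \le 1$; hence $x(S_e \setminus \{e\}) \le |S_e|-1$, and (3) gives $x(e) = x(S_e) - x(S_e \setminus \{e\}) \ge (|S_e|-1) - (|S_e|-1) = 0$.

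There is no real obstacle here: the statement is essentially a consequence of the axioms that make $\mathcal{P}$ play the role of a parallel partition and $\mathcal{S}$ that of a coparallel partition. The only subtlety is to notice that (L3) must be applied to elements \emph{other} than $e$ inside $P_e$ (resp.\ $S_e$), rather than to $e$ itself, so that the singleton conclusion is pushed onto each $f \in P_e \setminus\{e\}$ (resp.\ $S_e \setminus\{e\}$) and the telescoping of (2) or (3) becomes available.
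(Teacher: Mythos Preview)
Your proof is correct and follows essentially the same approach as the paper's: both arguments use (L2) to locate $e$ in its parallel class $P_e$ and coparallel class $S_e$, invoke (L3) to force the other-side classes of the remaining elements to be singletons, and then telescope constraint (2) or (3). The only cosmetic difference is that you organize the case split by ``upper bound vs.\ lower bound'' while the paper splits on the sizes of $P_e$ and $S_e$; the computations are identical.
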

\begin{proof} Let $e\in E$. Since $\mathcal P$ and $\mathcal S$ are partitions of $E$ (L2) then there exist a pair $(P, S)\in \mathcal P\times\mathcal S$ such that $\{e\}=P\cap S$ (L3).
\newline \textbf{Case 1:} if $|P|=|S|=1$ then inequalities (2) and (3) imply inequalties (5).
\newline \textbf{Case 2:} if $|S|\geq 2$ then $\{f\}\in \mathcal P$ for any $f\in S$ (L3). Inequalities (2) imply $x(f)\leq 1$ for any $f\in S$. In particular  $x(e)\leq 1$. It follows that $x(S\backslash \{e\})\leq |S|-1$, then $x(e)=x(S)-x(S\backslash \{e\})\geq (|S|-1)-(|S|-1)=0$
\newline \textbf{Case 3:} if $|P|\geq 2$ then $\{f\}\in \mathcal S$ for any $f\in P$ (L3). Inequalities (3) imply $x(f)\geq 0$ for any $f\in P$. In particular  $x(e)\geq 0$. It follows that $x(P\backslash \{e\})\geq 0$, then $x(e)=x(P)-x(P\backslash \{e\})\leq x(P)\leq 1$.
\end{proof}
\begin{lem}
If $x\in P(M)$ then $$ x(A) \leq r(A) \> \> \mathrm{for\>any}\>A\subseteq E \eqno  (6)$$
\end{lem}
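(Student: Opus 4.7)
The plan is to establish $x(A) \leq r(A)$ for every $A \subseteq E$ by a case analysis driven by where $A$ sits in the locked system, combined with an induction that unwinds the recursive structure of axiom (L15).

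First I would dispose of the base classes $A \in \mathcal{P} \cup \mathcal{P}^C \cup \mathcal{S} \cup \mathcal{S}^C \cup \mathcal{L} \cup \{\varnothing, E\}$ directly. The trivial cases $A = \varnothing$ and $A = E$ come from (L7) and constraint (1). For $A \in \mathcal{P}$ combine (2) with (LL8); for $A \in \mathcal{L}$ use (4); for $A \in \mathcal{S}$ sum the inequalities $0 \leq x(e) \leq 1$ from Lemma 2.2 over $e\in A$ and invoke (LL10). For the complements, use (1) to write $x(A) = r(E) - x(E\backslash A)$: when $A = E\backslash P$, nonnegativity of $x$ together with (LL9) yields $x(A) \leq r(E) = r(A)$; when $A = E\backslash S$, constraint (3) and (LL11) yield $x(A) \leq r(E)+1-|S| = r(A)$.

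For every remaining $A$, axiom (L15) asserts that exactly one of (P0)--(P4) holds. I would organize the argument as a double induction: on $|A|$ when the recursive chain runs through (P0)--(P2), and on $|E\backslash A|$ when it runs through (P3)--(P4); (L15) itself keeps each chain self-contained because the successor set is specified to verify again either (P1)/(P2) or (P3)/(P4). Case (P0) is immediate from (2), (LL8), or Lemma 2.2. In cases (P1) and (P2) I would split $x(A)$ additively along $L$ (resp.\ $P$), bound each piece by its rank using (4) (resp.\ (2) and Lemma 2.2), and close the induction via $|A\backslash L|<|A|$ (resp.\ $|A\backslash P|<|A|$), so that the recursive identity $r(A)=r(L)+r(A\backslash L)$ (resp.\ $r(A)=r(P)+r(A\backslash P)$) delivers the bound.

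The hard part is cases (P3) and (P4), since the next set in the recursion is strictly \emph{larger} than $A$. For (P3), write $x(A) = x(A\cup(E\backslash L)) - x(E\backslash L)$, then use (1) and (4) to get $x(E\backslash L) \geq r(E)-r(L)$, apply the induction hypothesis to the larger set $A\cup(E\backslash L)$ (which is a strict superset of $A$ since $L$ is proper by (L4), so $|E\backslash(A\cup(E\backslash L))|<|E\backslash A|$), and substitute the identity $r(A)=r(L)+r(A\cup(E\backslash L))-r(E)$. For (P4), argue analogously: $x(A) = x(A\cup S) - x(S\backslash A)$; combining (3) with $x(e) \leq 1$ from Lemma 2.2 gives $x(S\backslash A) \geq |S|-1 - |S\cap A|$; the induction hypothesis applies to $x(A\cup S)$ because $S\not\subseteq A$ is guaranteed by the hypothesis of (P4); substituting (LL11) into $r(A)=r(E\backslash S)+r(A\cup S)+|S\cap A|-r(E)$ the arithmetic produces the required bound exactly. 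The induction is well-founded because within each chain the monitoring quantity strictly decreases at every step and must ultimately terminate in one of the base classes already handled.
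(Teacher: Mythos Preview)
Your proof is correct and follows essentially the same approach as the paper's: dispose of the base classes $\mathcal P,\mathcal S,\mathcal S^C,\mathcal L,\{\varnothing,E\}$ directly, then unwind (L15) by induction on $|A|$ for the (P1)/(P2) chain and on $|E\backslash A|$ for the (P3)/(P4) chain, with the same algebraic manipulations (your (P3)/(P4) identities are trivially equivalent rewrites of the paper's). One small correction: every time you cite ``Lemma~2.2'' for the bounds $0\le x(e)\le 1$ you mean the \emph{previous} lemma (Lemma~2.1); otherwise you are invoking the statement you are proving.
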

\begin{proof} We have the following cases:
\newline \textbf{Case 1:} if $A=\varnothing$ then $x(A)=0\leq 0=r(A)$ (L7).
\newline \textbf{Case 2:} if $A=E$ then $x(A)=r(A)\leq r(A)$ (inequality 1).
\newline \textbf{Case 3:} if $A\in \mathcal P$ then $x(A) \leq 1=r(A)$ (inequality 2 and LL8).
\newline \textbf{Case 4:} if $A\in \mathcal S$ then $x(A)\leq |A|=r(A)$ (Lemma 2.1 and LL10).
\newline \textbf{Case 5:} if $E\backslash A\in \mathcal S$ then $x(A)=x(E)-x(E\backslash A)\leq r(E)-|E\backslash A|+1=r(A)$ (inequality 3 and LL11).
\newline \textbf{Case 6:} if $A\not \in \mathcal P\cup \mathcal S\cup \mathcal L\cup \{\varnothing, E\}$ and $E\backslash A\not \in \mathcal S$ then the axiom (L15) implies one of the following subcases:
\newline \textbf{Subcase 6.0:} There exists $Y\in \mathcal P\cup \mathcal S$ such that $A\subset Y$, then $r(A)=1=r(Y)\geq x(Y)\geq x(A)$ (Case 3) or $r(A)=|A|\geq x(A)$ (inequality 5).
\newline \textbf{Subcase 6.1:} There exists $L\in \mathcal L$ such that $L\subset A$ , $r(A)=r(L)+r(A\backslash L)$, and $A\backslash L$ verifies (P1) or (P2). So by induction on $|A|$, $x(A)=x(L)+x(A\backslash L)\leq r(L)+r(A\backslash L)=r(A)$ because $|A\backslash L|<|A|$ and inequality 4.
\newline \textbf{Subcase 6.2:} There exists $P\in \mathcal P$ such that $P\cap A\neq \varnothing$ , $r(A)=r(P)+r(A\backslash P)$, and $A\backslash P$ verifies (P1) or (P2). So by induction on $|A|$, $x(A)\leq x(P)+x(A\backslash P)\leq r(P)+r(A\backslash P)=r(A)$ because $|A\backslash P|<|A|$, Lemma 2.1 and Case 3,
\newline \textbf{Subcase 6.3:} There exists $L\in \mathcal L$ such that $A\subset L$ , $r(A)=r(L)+r(A\cup (E\backslash L))-r(E)$, and $A\cup (E\backslash L)$ verifies (P3) or (P4). So by induction on $|E\backslash A|$, $x(A)=x(L)+x(A\cup (E\backslash L))-x(E)\leq r(L)+r(A\cup (E\backslash L))-r(E)=r(A)$ because $|E\backslash(A\cup (E\backslash L))|=|(E\backslash A)\cap L|<|E\backslash A|$ and inequality 4,
\newline \textbf{Subcase 6.4:} There exists $S\in \mathcal S$ such that $(E\backslash S)\cup A\neq E$ , $r(A)=r(E\backslash S)+r(A\cup S)+|S\cap A|-r(E)$, and $A\cup S$ verifies (P3) or (P4). So by induction on $|E\backslash A|$,
$$x(A)=x(E\backslash S)+x(A\cup S)+x(S\cap A)-x(E)\leq r(E\backslash S)+r(A\cup S)+|S\cap A|-r(E)=r(A)$$
because $|E\backslash(A\cup S)|=|(E\backslash A)\cap (E\backslash S)|<|E\backslash A|$, Lemma 2.1, Case 5, and inequality 4.
\end{proof}
 Let $Q(M)$ be the set of $x\in R^E$ such that $x$ verifies the inequalities (1), (5) and (6).
\begin{cor} $P(M)=Q(M)$.
\end{cor}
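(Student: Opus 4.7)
The plan is to verify the two inclusions $P(M) \subseteq Q(M)$ and $Q(M) \subseteq P(M)$ separately, relying on Lemmas 2.1 and 2.2 for the forward direction and on the axioms (LL8)--(LL11) for the reverse.

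For the inclusion $P(M) \subseteq Q(M)$, the work is essentially already done. Any $x \in P(M)$ satisfies (1) by definition, satisfies (5) by Lemma 2.1, and satisfies (6) by Lemma 2.2. So $x \in Q(M)$ is immediate.

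For the reverse inclusion $Q(M) \subseteq P(M)$, I would take $x \in Q(M)$ and verify inequalities (2), (3), (4) one at a time using (6) combined with the explicit rank values forced by the axioms. Inequality (2) follows at once from (6) applied to $P \in \mathcal{P}$ together with (LL8), which gives $x(P) \leq r(P) = 1$. Inequality (4) is likewise immediate, since (6) applied to $L \in \mathcal{L}$ gives $x(L) \leq r(L)$. Inequality (3) requires one extra line: for $S \in \mathcal{S}$, using (1), (6) applied to the set $E \setminus S$, and (LL11), we get
$$x(S) = x(E) - x(E \setminus S) \geq r(E) - r(E \setminus S) = r(E) - (r(E) + 1 - |S|) = |S| - 1.$$

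There is no real obstacle here; the corollary is essentially a bookkeeping consequence of the previous two lemmas together with the explicit rank formulas (LL8)--(LL11) that pin down $r$ on parallel/coparallel closures and their complements. The only thing to be careful about is making sure inequality (3) is recovered through the complement, since $S$ itself does not appear as an upper-bound constraint in $Q(M)$; the combination of (1) and (6) on $E \setminus S$ is precisely what supplies it.
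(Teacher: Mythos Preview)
Your proof is correct and follows essentially the same route as the paper: Lemmas~2.1 and~2.2 give $P(M)\subseteq Q(M)$, and for the reverse inclusion you recover (2) and (4) directly from (6) with (LL8), and (3) from (1), (6) on $E\setminus S$, and (LL11). This matches the paper's argument line for line.
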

\begin{proof} Lemma 2.1 and 2.2 imply that $P(M)\subseteq Q(M)$. We need to prove the inverse inclusion. Let $x\in Q(M)$. It is clear that $x$ verifies the inequalities (2) and (4) by using inequality (6) and axiom (LL8). Let $S\in \mathcal S$ then, by using inequalities (1), (6) and axiom (LL11), $x(S)=x(E)-x(E\backslash S)\geq r(E)-r(E\backslash S)=r(E)-r(E)-1+|S|=|S|-1$, which is inequality (3).
\end{proof}
\begin{lem} Let $x\in P(M)$ such that $x(L_i)=r(L_i)$, for some $L_i\in \mathcal L, i=1, 2$.
\newline If  $L_1\cap L_2\neq \varnothing$ then there exists $L\in\mathcal P\cup \mathcal L$ such that $L\subseteq L_1\cap L_2$ and $x(L)=r(L)$.
\end{lem}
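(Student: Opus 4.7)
The plan is first to upgrade the hypothesis $x(L_i)=r(L_i)$ at $L_1$ and $L_2$ into tightness at the intersection itself, i.e., to show $x(L_1\cap L_2)=r(L_1\cap L_2)$. I would apply axiom (L14) to the pair $(L_1,L_2)\in\mathcal{L}^2$ to obtain $r(L_1\cap L_2)+r(L_1\cup L_2)\leq r(L_1)+r(L_2)$, while the modularity of $x$ together with the hypothesis gives $x(L_1\cap L_2)+x(L_1\cup L_2)=x(L_1)+x(L_2)=r(L_1)+r(L_2)$. Combining these with Lemma 2.2 applied to both $L_1\cap L_2$ and $L_1\cup L_2$ forces every inequality in the resulting chain to be tight; in particular $x(L_1\cap L_2)=r(L_1\cap L_2)$.

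Next I would case-split on the status of $L_1\cap L_2$. If $L_1\cap L_2\in\mathcal{L}$, the choice $L:=L_1\cap L_2$ is immediate. Otherwise axiom (L16) places $L_1\cap L_2$ into one of (P0), (P1), (P2) of (L15). In subcase (P1) there is $L\in\mathcal{L}$ with $L\subset L_1\cap L_2$ and $r(L_1\cap L_2)=r(L)+r((L_1\cap L_2)\backslash L)$; Lemma 2.2 bounds each summand on the right by the matching $x$-value, so the already-established tightness at $L_1\cap L_2$ transfers to both pieces, yielding in particular $x(L)=r(L)$. Subcase (P2) runs analogously with a parallel closure $P\in\mathcal{P}$ in place of $L$; here I would first invoke axiom (L5), applied since $P$ meets each $L_i$, to see that $P\subseteq L_1\cap L_2$, after which the same Lemma 2.2 argument delivers $x(P)=r(P)$.

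Subcase (P0) is the delicate one. It produces $Y\in\mathcal{P}\cup\mathcal{S}$ with $L_1\cap L_2\subseteq Y$; applying (L5) to each pair $(Y,L_i)$ forces $Y\subseteq L_1$ and $Y\subseteq L_2$, so in fact $Y=L_1\cap L_2$. If $Y\in\mathcal{P}$, set $L:=Y$. If $Y\in\mathcal{S}$, (P0) gives $r(L_1\cap L_2)=|L_1\cap L_2|$, and Lemma 2.1 combined with the tightness $x(L_1\cap L_2)=|L_1\cap L_2|$ forces $x(f)=1$ for every $f\in L_1\cap L_2$. Picking any such $e$ and letting $P\in\mathcal{P}$ be its parallel closure, axiom (L3) yields $|P|=1$ or $|Y|=1$; in the second case a further application of (L5) to $(P,L_i)$ forces $P\subseteq L_1\cap L_2=\{e\}$. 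Either way $\{e\}\in\mathcal{P}$ and $x(\{e\})=1=r(\{e\})$, so $L:=\{e\}$ completes the argument.

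The main obstacle will be the bookkeeping in the (P0)/$\mathcal{S}$ subcase, where one has to combine (L3), (L5), and Lemma 2.1 to extract an honest parallel-closure singleton inside $L_1\cap L_2$. Everything else reduces quickly to Lemma 2.2 once the intersection itself is known to be tight.
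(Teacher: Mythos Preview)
Your approach is essentially identical to the paper's: establish tightness at $L_1\cap L_2$ via (L14) and Lemma~2.2, then case-split using (L16), handling (P1) and (P2) by the same splitting-plus-Lemma~2.2 squeeze, with (L5) used in (P2) to get $P\subseteq L_1\cap L_2$. The one divergence is your treatment of (P0): the paper silently drops it, and in fact it is vacuous here, since (P0) gives $L_1\cap L_2\subset Y$ strictly, while (L5) applied to $(Y,L_i)$ forces $Y\subset L_i$ for $i=1,2$, hence $Y\subseteq L_1\cap L_2\subset Y$, a contradiction---so the ``delicate'' $\mathcal S$-branch you flagged as the main obstacle never arises.
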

\begin{proof} By using Lemma 2.2 and axiom (L14), we have:
$$ r(L_1)+r(L_2)=x(L_1)+x(L_2)=x(L_1\cap L_2)+x(L_1\cup L_2)\leq r(L_1\cap L_2)+r(L_1\cup L_2)\leq r(L_1)+r(L_2).$$
It follows that $x(L_1\cap L_2)=r(L_1\cap L_2)$ and $x(L_1\cup L_2)=r(L_1\cup L_2)$.
\newline If $L_1\cap L_2\in \mathcal L$ then $L=L_1\cap L_2$.
\newline Otherwise, by using axiom (L16), we have two cases:
\newline \textbf{Case 1:} There exists $L\in \mathcal L$ such that $L\subset L_1\cap L_2$ , $r(L_1\cap L_2)=r(L)+r((L_1\cap L_2)\backslash L)$, and $(L_1\cap L_2)\backslash L$ verifies (P1) or (P2) of axiom (L15). It is not difficult to see, by a similar argument as hereinabove, that $x(L)=r(L)$ and $x((L_1\cap L_2)\backslash L)=r((L_1\cap L_2)\backslash L)$.
\newline \textbf{Case 2:} There exists $P\in \mathcal P$ such that $P\cap (L_1\cap L_2)\neq \varnothing$ , $r(L_1\cap L_2)=r(P)+r((L_1\cap L_2)\backslash P)$, and $(L_1\cap L_2)\backslash P$ verifies (P1) or (P2) of axiom (L15). Axiom (L5) implies that $P\subseteq L_1\cap L_2$. It is not difficult to see, by a similar argument as hereinabove, that $x(P)=r(P)$ and $x((L_1\cap L_2)\backslash P)=r((L_1\cap L_2)\backslash P)$.
\end{proof}
\begin{thm} $P(M)$ is integral. \end{thm}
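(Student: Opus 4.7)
The plan is to prove that every vertex of $P(M)$ is integer-valued by a standard uncrossing argument on the tight constraints at a supposed fractional vertex. Let $x^{*}$ be a vertex of $P(M)$ and, aiming at a contradiction, suppose its fractional support $F:=\{e\in E:0<x^{*}(e)<1\}$ is nonempty (the bounds $0\leq x^{*}(e)\leq 1$ are already guaranteed by Lemma 2.1). Using Corollary 2.3, I work in the reformulation $P(M)=Q(M)$, so the tight constraints at $x^{*}$ include $x^{*}(E)=r(E)$ together with the equalities $x^{*}(A)=r(A)$ for all $A$ in a subfamily $\mathcal{T}\subseteq\{E\}\cup\mathcal{P}\cup\mathcal{S}^{C}\cup\mathcal{L}$; since $x^{*}$ is a vertex, the vectors $\{\chi_{A}:A\in\mathcal{T}\}$ together with $\{\chi_{e}:e\notin F\}$ must span $\mathbb{R}^{E}$.

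The next step is to uncross the family $\mathcal{T}$. For two crossing sets $A,B\in\mathcal{T}$ the goal is to replace them by $A\cap B$ and $A\cup B$, which preserves $\chi_{A}+\chi_{B}$ and hence the span. Axiom (L2) rules out crossings within $\mathcal{P}$ and within $\mathcal{S}$, while (L3) and (L5) sharply restrict mixed crossings of a parallel or coparallel closure with a locked set. For two crossing tight locked sets, Lemma 2.4, together with the recursive description (L15) and axioms (L16)--(L17), yields tight replacements inside $\mathcal{P}\cup\mathcal{L}\cup\mathcal{S}^{C}$ with unchanged span. The remaining cases, most notably two distinct complements $E\backslash S_{1},E\backslash S_{2}$ or mixed crossings involving a coparallel complement, are handled via the identity $\chi_{E\backslash S_{1}}+\chi_{E\backslash S_{2}}=\chi_{E\backslash(S_{1}\cup S_{2})}+\chi_{E}$ and by applying submodularity (L14) in dual form; since $\chi_{E}$ already belongs to the spanning set, each such uncrossing strictly decreases a suitable crossing counter and terminates in a laminar refinement.

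Once $\mathcal{T}$ is refined to a laminar family, its incidence matrix is totally unimodular; combined with the integer-valued right-hand sides $r(A)$ (guaranteed by (L7)--(L12)) and the integer coordinates $\chi_{e}$ for $e\notin F$, the unique solution $x^{*}$ of this full-rank $|E|\times|E|$ system is forced to be integral, contradicting $F\neq\varnothing$. The main obstacle is the uncrossing step: submodularity (L14) is asserted only on $\mathcal{P}\cup\mathcal{S}\cup\mathcal{L}\cup\{\varnothing,E\}$, so crossings that leave this family, particularly the mixed locked/coparallel-complement ones, require a careful interplay of (L5), (L15)--(L17) and Lemma 2.4 to verify that the replacements remain tight. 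Once laminarity is achieved, the integrality conclusion mirrors Edmonds' integrality proof for the matroid base polytope.
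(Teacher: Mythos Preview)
Your plan follows the classical Edmonds uncrossing route, which is genuinely different from the paper's proof. The paper argues by perturbation: given a fractional vertex $x$ with fractional support $F$, it locates two elements $e,f\in F$ such that $x\pm\varepsilon(\chi_e-\chi_f)$ satisfies every tight constraint of $x$, contradicting extremality. Axioms (L2), (L3), (L5) guarantee that once $e,f$ lie in a common $X\in\mathcal P\cup\mathcal S$ (Case~1) or in a common tight $L\in\mathcal L$ not separated by any other tight locked set (Subcase~2.1), every tight parallel, coparallel or locked constraint contains both or neither of $e,f$; Subcase~2.2 uses Lemma~2.4 together with (L13) to force such a pair to exist.

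The uncrossing approach, as you sketch it, has a genuine gap. Axiom (L14) asserts $r(X\cap Y)+r(X\cup Y)\le r(X)+r(Y)$ only when \emph{both} $X,Y\in\mathcal P\cup\mathcal S\cup\mathcal L\cup\{\varnothing,E\}$. After one uncrossing of tight $L_1,L_2\in\mathcal L$ you obtain tight sets $L_1\cap L_2$ and $L_1\cup L_2$, but these need not belong to that family; a further uncrossing of $L_1\cap L_2$ against a third tight $L_3\in\mathcal L$ would require submodularity for the pair $(L_1\cap L_2,L_3)$, which (L14) does not supply. You try to repair this with Lemma~2.4, but that lemma only produces some $L\in\mathcal P\cup\mathcal L$ with $L\subseteq L_1\cap L_2$ and $x^*(L)=r(L)$; in general $L\neq L_1\cap L_2$, so the identity $\chi_{L_1}+\chi_{L_2}=\chi_{L_1\cap L_2}+\chi_{L_1\cup L_2}$ is no longer available and span is not preserved by the replacement. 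Axioms (L16)--(L17) likewise decompose the \emph{value} $r(L_1\cap L_2)$ or $r(L_1\cup L_2)$ recursively, but they do not hand you a pair of tight sets in the base family whose incidence vectors sum to $\chi_{L_1}+\chi_{L_2}$. Full submodularity of $r$ on $2^E$ would make your argument work, but in this paper that is only available \emph{after} Theorem~2.6, so invoking it here would be circular; the paper's perturbation proof is arranged precisely so that submodularity is never needed beyond the pairs covered by (L14).
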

\begin{proof} Let $x\in P(M)$ be a fractional extreme point and $F=\{g\in E$ such that $0<x(g)<1\}$. Since $x$ is fractional and $x(E)=r(E)$ is integral then $|F|\geq 2$.
\newline Let $\mathcal P_x=\{P\in \mathcal P$ such that $x(P)=1\}$, $\mathcal S_x=\{S\in \mathcal S$ such that $x(S)=|S|-1\}$, and $\mathcal L_x=\{L\in \mathcal L$ such that $x(L)=r(L)\}$, i.e., the corresponding tight constraints of $x$.
\newline \textbf{Case 1:} There exists $X\in \mathcal P\cup \mathcal S$ such that $|X\cap F|\geq 2$. Let $\{e, f\}\subseteq X\cap F$. It follows that there exists $\varepsilon >0$ such that $0<x(e)-\varepsilon <1$ and $0<x(f)+\varepsilon <1$. Let $x_\varepsilon \in R^E$ such that:
\[ x_\varepsilon (g)= \left\{ \begin{array}{ll}
   x(g) & \mbox{if $g\not \in \{e, f\}$};\\
   x(e)-\varepsilon & \mbox{if $g=e$ };\\
   x(f)+\varepsilon & \mbox{if $g=f$}.
   \end{array} \right.\]
It is clear that $x_\epsilon(E)=r(E)$. Axioms (L2), (L3) and (L5) imply that $\mathcal P_x=\mathcal P_{x_\varepsilon}$, $\mathcal S_x=\mathcal S_{x_\varepsilon}$, and $\mathcal L_x=\mathcal L_{x_\varepsilon}$, i.e., $x_\varepsilon$ verifies the same tight constraints as $x$, a contradiction.
\newline \textbf{Case 2:} For any $X\in \mathcal P\cup \mathcal S$, we have $|X\cap F|\leq 1$. It follows that for any  $X\in \mathcal P_x\cup \mathcal S_x$, we have $X\cap F=\varnothing$.
\newline \textbf{Subcase 2.1:} There exists $L\in \mathcal L_x$ such that $|L\cap F|\geq 2$, and $\{ e, f\}\subseteq L\cap F$ such that if $L'\in \mathcal L_x$ then $\{ e, f\}\subseteq L'$ or $\{ e, f\}\cap L'=\varnothing$. So we proceed as in Case 1 and we conclude.
\newline \textbf{Subcase 2.2:} For any $L\in \mathcal L_x$ such that $|L\cap F|\geq 2$, and any $\{ e, f\}\subseteq L\cap F$, there exists $L'\in \mathcal L_x$ such that $|\{ e, f\}\cap L'|=1$. Suppose that $f\in L'$. So we have:
$$ r(L)+r(L')=x(L)+x(L')=x(L\cap L')+x(L\cup L')\leq r(L\cap L')+r(L\cup L')\leq r(L)+r(L').$$
It follows that $x(L\cap L')=r(L\cap L')$ and $x(L\cup L')=r(L\cup L')$. It follows that $|L\cap L'\cap F|\geq 2$.
\newline By using Lemma 2.4, and since $L\cap L'\neq \varnothing$ then there exists $X\in \mathcal P\cup \mathcal L$ such that $X\subseteq L\cap L'$ and $x(X)=r(X)$. By induction on $|L\cap L'|$, we have $X\cap F\neq \varnothing$ (otherwise we do the same for $(L\cap L')\backslash X$), i.e.,  $|X\cap F|\geq 2$. Induction on $|L|$ and axiom (L13) imply that $r(X)=1$, i.e., $X\in \mathcal P$, a contradiction.
\end{proof}
Now we can state our main theorem as follows.
\begin{thm} The extreme points of $P(M)$ are the bases of a matroid defined on $E$, and $\mathcal P, \mathcal S, \mathcal L, r$ are, respectively, the class of parallel and coparallel closures, locked subsets and rank function of this matroid.
\end{thm}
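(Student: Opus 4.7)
By Theorem 2.5, $P(M)$ is integral, and by Corollary 2.3 combined with Lemmas 2.1 and 2.2, its defining constraints may equivalently be taken as $x(E)=r(E)$, $0\le x\le 1$, and $x(A)\le r(A)$ for every $A\subseteq E$. Consequently the extreme points of $P(M)$ are precisely the characteristic vectors $\chi_B$ of sets in
\[
\mathcal{B}=\{B\subseteq E\,:\,|B|=r(E)\text{ and }|B\cap A|\le r(A)\text{ for every }A\subseteq E\}.
\]

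The first task is to show that $\mathcal{B}$ is the basis family of a matroid $M'$ on $E$. Non-emptiness is clear because $P(M)$ is non-empty (for instance the uniform vector $x(e)=r(E)/|E|$ lies in $P(M)$ by (L7), (LL8) and (LL10)--(LL11)) and bounded, so it has at least one vertex. For the exchange axiom, I would take $B_1,B_2\in\mathcal{B}$ and $e\in B_1\backslash B_2$; the midpoint $z=(\chi_{B_1}+\chi_{B_2})/2$ lies in $P(M)$, and, using Lemma 2.4 together with the uncrossing provided by axiom (L14) and the recursion (L15), one can localize the tight sets of $z$ so that for some $f\in B_2\backslash B_1$ no tight set separates $e$ from $f$. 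Then $\chi_{(B_1\backslash\{e\})\cup\{f\}}$ still satisfies every inequality describing $P(M)$, giving $(B_1\backslash\{e\})\cup\{f\}\in\mathcal{B}$, as required.

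Next I would identify $\mathcal{P}$, $\mathcal{S}$, $\mathcal{L}$ and $r$ with the structural data of $M'$. Axiom (LL8) gives $r(P)=1$ for each $P\in\mathcal{P}$, and (L2), (L3), (L13) make $P$ maximal with this property through any of its elements, so $P$ is a parallel closure of $M'$; since $\mathcal{P}$ already partitions $E$, it coincides with the entire family of parallel closures. The dual argument, using (LL10)--(LL11), identifies $\mathcal{S}$ with the coparallel closures. For $L\in\mathcal{L}$, (L12)--(L14) together with (L16)--(L17) translate directly into the 2-connectivity of $M'|L$ and $M'^{*}|(E\backslash L)$, with $r(L)\ge 2$ and $r^{*}(E\backslash L)\ge 2$, so $L$ is locked in $M'$. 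Conversely, Theorem 1.1 applied to $M'$ forces every locked subset of $M'$ to appear as one of the facet-defining inequalities (4), hence to lie in $\mathcal{L}$. The agreement of $r$ with the rank function of $M'$ on $\mathcal{P}\cup\mathcal{S}\cup\mathcal{L}\cup\{\varnothing,E\}$ is then immediate from the description of $P(M)$.

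The main obstacle will be the basis exchange step, where one has to convert the global integrality of $P(M)$ into a one-element exchange between two specific vertices; the recursion in (L15) is exactly what permits tight sets outside $\mathcal{P}\cup\mathcal{S}\cup\mathcal{L}$ to be broken into tight members of these three classes, and the exchange argument must thread carefully through that recursion. A subtler point is the converse in the identification of $\mathcal{L}$, which depends on applying Theorem 1.1 to the matroid $M'$ we have just constructed rather than to a matroid we started with.
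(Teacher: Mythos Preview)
The exchange step in your plan has a genuine gap. You propose to look at the midpoint $z=(\chi_{B_1}+\chi_{B_2})/2$ and its tight constraints, and to find $f\in B_2\setminus B_1$ not separated from $e$ by any such set; but the constraints that $(B_1\setminus\{e\})\cup\{f\}$ can violate are those tight for $B_1$ alone (namely $A$ with $|B_1\cap A|=r(A)$, $f\in A$, $e\notin A$), and these need not be tight for $B_2$, hence not for $z$. So avoiding the tight sets of $z$ does not prevent $(B_1\setminus\{e\})\cup\{f\}$ from leaving $P(M)$, and the implication you draw fails. A smaller point: the uniform vector $x\equiv r(E)/|E|$ is not in $P(M)$ in general --- nothing in (L7)--(L11) forces $|P|\,r(E)/|E|\le 1$ for $P\in\mathcal P$ or $|L|\,r(E)/|E|\le r(L)$ for $L\in\mathcal L$ --- so your non-emptiness argument is also unjustified.

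The paper's proof of exchange is different and more direct. It fixes vertices $x,x'$ and $e\in x\setminus x'$, supposes for contradiction that $x_f:=x-e+f\notin P(M)$ for every $f\in x'\setminus x$, and then performs a case split on which of the inequalities (2), (3) or (4) each $x_f$ violates. Since a violation of $x(A)\le r(A)$ by $x_f$ forces $x(A)=r(A)$, the relevant objects are the constraints tight at $x$, not at any midpoint. Cases 1 and 2 (parallel and coparallel constraints) are short counting arguments using (L2); the substantive case is Case 3 (locked constraints), where maximal violated sets $L_f$ are combined via submodularity (L14) and the recursion (L17), eventually forcing $x\setminus x'\subseteq L$ for some $L$ with $e\notin L$, a contradiction. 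Your outline for the second half of the statement --- identifying $\mathcal P,\mathcal S,\mathcal L,r$ with the parallel closures, coparallel closures, locked subsets and rank of the resulting matroid --- is actually more thorough than the paper's proof, which stops after establishing the exchange axiom and does not carry out that identification.
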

\begin{proof}
Lemma 2.1 and Theorem 2.5 imply that the extreme points of $P(M)$ are in $\{0, 1\}^E$. We remind here that we will not distinguish between sets and $\{0, 1\}$-vectors.
\newline Constraint (1) implies that extreme points of $P(M)$ have the same cardinality $r(E)$. We only need to prove the basis exchange axiom. We will do it by contradiction.
\newline Let $x$ and $x'$ be two extreme points of $P(M)$ and $e\in x\backslash x'$ such that for any $f\in x'\backslash x$, $x-e+f$ is not an extreme point, i.e., $x-e+f\not \in P(M)$. It is clear that $|x'\backslash x|\geq 2$. Let $x_f=x-e+f$.
\newline \textbf{Case 1:} $x_f$ violates an inequality of type (2), i.e., there exists $P_f\in \mathcal P$ such that $x_f(P_f)\geq 2$. It follows that $e\not \in P_f$, $f\in P_f$, $x(P_f)=1$, and $x_f(P_f)=2$. Thus there exists $f'\in P_f\cap x_f\cap x$ such that $f'\neq f$.
\newline \textbf{Claim:} If $f_1\neq f_2$ then $f'_1\neq f'_2$.
\newline Suppose, by contradiction, that $f'=f'_1=f'_2$. Since $f'\in P_{f_i}\cap x_{f_i}, i=1, 2$, then $f'\in P_{f_1}\cap P_{f_2}$. Axiom (L2) implies that $P_{f_1}=P_{f_2}=P$ and $\{f_1, f_2\}\subseteq P\cap x'$. It follows that $x'(P)\geq 2$, a contradiction.
\newline Since $|x\backslash x'|=|x'\backslash x|$ then $x\backslash x'=\bigcup \limits_{i=1}^{|x\backslash x'|} \{f'_i\}\subseteq \bigcup \limits_{i=1}^{|x\backslash x'|} P_{f_i}$ but $e\not \in P_{f_i}, i=1, 2, ..., |x\backslash x'|$, a contradiction.
\newline \textbf{Case 2:} $x_f$ violates an inequality of type (3), i.e., there exists $S_f\in \mathcal S$ such that $x_f(S_f)\leq |S|-2$. It follows that $e\in S_f$, $f\not \in S_f$, $x(S_f)=|S_f|-1$, and $x_f(S_f)=|S_f|-2$. Since $e\in S_f$ for any $f\in x'\backslash x$, and by using axiom (L2), we have $S_f=S$, i.e., for distinct $f_1$ and $f_2$, $S_{f_1}=S_{f_2}=S$. It follows that $(x'\backslash x)\cap S=\varnothing$. But $x'(S)\geq |S|-1$ because $x'\in P(M)$, then $(x'\cap x)(S)\geq |S|-1$. It follows that $(x\backslash x')\cap S=\varnothing$, a contradiction with $e\in S$.
\newline \textbf{Case 3:}  $x_f$ violates an inequality of type (4), i.e., there exists $L_f\in \mathcal L$ such that $x_f(L_f)\geq r(L_f)+1$. It follows that $e\not \in L_f$, $f\in L_f$, $x(L_f)=r(L_f)$, and $x_f(L_f)=r(L_f)+1$. We choose $L_f$ maximal for this property.
\newline \textbf{Subcase 3.1:} There are $f_1\neq f_2$ such that $x_{f_1}(L_{f_2})=r(L_{f_2})$, i.e., $f_1\notin L_{f_2}$.
\newline As shown in the proof of Lemma 2.4, $x(L_{f_1}\cup (L_{f_2})=r(L_{f_1}\cup (L_{f_2})$. Since $L_{f_2}$ is maximal then $L_{f_1}\cup L_{f_2}\notin \mathcal L$. Since $e\not \in L_{f_1}\cup L_{f_2}$, and by using  axiom (L17), there exists $S\in \mathcal S$ such that $(E\backslash S)\cup (L_{f_1}\cup L_{f_2})\neq E$ , $r(L_{f_1}\cup L_{f_2})=r(E\backslash S)+r(L_{f_1}\cup L_{f_2}\cup S)+|S\cap (L_{f_1}\cup L_{f_2})|-r(E)$, and $(L_{f_1}\cup L_{f_2})\cup S$ verifies (P4) (property (P3) cannot be verified because of maximality of $L_{f_2}$). By a similar argument as in the proof of Lemma 2.4, we have:
\newline \textbf{(1)} $x(E\backslash S)=r(E\backslash S)$ which imply that $x(S)=|S|-1$, i.e., $S=x\backslash \{e'\}\cup \{ f'\}$ for some $e'\in x$ and $f'\notin x$,
\newline and  \textbf{(2)} $x(L_{f_1}\cup L_{f_2}\cup S)=r(L_{f_1}\cup L_{f_2}\cup S)$.
\newline If $e\in S$ (i.e. $e\neq e'$) then at least one the $x_{f_i}(S)=|S|-2$ (i.e. $f_i\neq f'$) and we are in Case 2).
Else $e\notin S$, i.e. $e\notin L_{f_1}\cup L_{f_2}\cup S$ and by induction on $|E\backslash X|$ where $X=L_{f_1}\cup L_{f_2}$, we get a contradiction.
\newline \textbf{Subcase 3.2:} For any $f_1\neq f_2$, $x_{f_1}(L_{f_2})=r(L_{f_2})+1$, i.e., $f_1\in L_{f_2}$. It follows that there exists $L\in \mathcal L$ such that $x'\backslash x\subseteq L$, $e\notin L$, and $x(L)=r(L)$. We have then:
\newline $r(L)\geq x'(L)=(x'\backslash x)(L)+(x'\cap x)(L)=|x'\backslash x|+(x'\cap x)(L)=|x\backslash x'|+(x'\cap x)(L)\geq (x\backslash x')(L)+(x'\cap x)(L)=x(L)=r(L)$. It follows that $(x\backslash x')(L)=|x\backslash x'|$, i.e., $x\backslash x'\subseteq L$, a contradiction with $e\in x\backslash x'$.
\end{proof}
Actually this gives a new proof for the bases polytope of a matroid and its facets based on the locked structure only.

\section{An improved algorithm for matroid recognition}

Since we have proved that the locked axioms define a matroid uniquely, thus, recognition of matroids is equivalent to recognize if a basic quadruple is a locked system. We give now the running time complexity for testing each of the needed locked axioms.
\newline (L1) can be tested in $O(1)$. (L2) can be tested in $O(|E|^2)$. (L3) can be tested in $O(|E|^2)$. (L4) and (L5) can be tested in $O(|E||\mathcal L|)$. We need the following lemma for (L6).
\begin{lem}
We can replace axiom (L6) by the following axiom:
\newline (LL6) $r$ is a nonnegative integer function defined on $\mathcal L\cup \{E\}$.
\end{lem}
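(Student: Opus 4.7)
The plan is to show that specifying $r$ on all of $2^E$ is redundant: once we know $r$ as a nonnegative integer function on $\mathcal L \cup \{E\}$, the remaining axioms force a unique nonnegative extension to $2^E$, so (L6) is recovered automatically.

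First I would observe that the remaining axioms already pin down $r$ on every subset that appears as an argument in axioms (L7)--(L14). Specifically, (L7) gives $r(\varnothing)=0$; (LL8)--(LL11) express $r(P)$, $r(E\setminus P)$, $r(S)$, $r(E\setminus S)$ purely in terms of $r(E)$ and set cardinalities, all of which are nonnegative integers as soon as $r(E)$ is one (which is supplied by (LL6)); and $r$ on $\mathcal L$ is given directly by (LL6). Thus on the base class $\mathcal B := \mathcal P \cup \mathcal P^C \cup \mathcal S \cup \mathcal S^C \cup \mathcal L \cup \{\varnothing, E\}$ the function $r$ is already determined and nonnegative-integer valued.

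Next, for any $X\notin \mathcal B$, axiom (L15) provides a recursive prescription: options (P0)--(P2) reduce $r(X)$ to $r$-values on strictly smaller sets (an element of $\mathcal L \cup \mathcal P$ is stripped off), while (P3) and (P4) reduce it to strictly larger sets. Inducting on $|X|$ in the former case and on $|E\setminus X|$ in the latter, the recursion necessarily terminates in the base class $\mathcal B$, so $r(X)$ is well-defined. Uniqueness of the value — the fact that different choices among (P0)--(P4) yield the same answer — is subsumed by the hypothesis that the axioms hold, since any valid $r$ satisfying (L15) must produce a consistent value.

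The main obstacle is to verify that this extension is nonnegative, so that (L6) genuinely holds. For branches (P0)--(P2) nonnegativity is immediate by induction, since $r(X)$ is a sum of nonnegative terms. The delicate cases are (P3) and (P4), which involve subtracting $r(E)$; here I would invoke submodularity (L14), combined with the lower bound $r(L)\ge \max\{2,r(E)+2-|E\setminus L|\}$ from (L12) and the formulas (LL9), (LL11), to conclude that $r(L)+r(X\cup(E\setminus L))\ge r(E)$ and $r(E\setminus S)+r(X\cup S)+|S\cap X|\ge r(E)$, respectively. Once nonnegativity is established on all of $2^E$, (L6) is implied by (LL6) together with the remaining axioms, completing the replacement.
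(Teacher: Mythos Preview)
Your overall architecture matches the paper's: both arguments observe that (LL6) together with (L7)--(L11) already determine $r$ as a nonnegative integer function on the base class $\mathcal P\cup\mathcal P^C\cup\mathcal S\cup\mathcal S^C\cup\mathcal L\cup\{\varnothing,E\}$ (the paper calls this intermediate statement (LLL6)), and that (L15) then prescribes $r$ recursively everywhere else. The paper's proof is far terser than yours: it simply records the implication (LL6)$\,\Rightarrow\,$(LLL6) and then defers the extension step to a forward reference (``Remark~1''), which declares that (L15)--(L17) are computational recipes rather than conditions to be checked. In particular, the paper does \emph{not} attempt to verify nonnegativity of the recursively defined values at all; that property is recovered only a posteriori, once Theorem~2.6 identifies $r$ with a genuine matroid rank function.

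This is exactly where your proposal diverges, and where it runs into trouble. You flag nonnegativity in the (P3)/(P4) branches as ``the main obstacle'' and propose to handle it via (L14) and (L12). But (L14) is only assumed for pairs $(X,Y)$ with both $X,Y\in\mathcal P\cup\mathcal S\cup\mathcal L\cup\{\varnothing,E\}$, whereas the sets $X\cup(E\setminus L)$ and $X\cup S$ appearing in (P3)/(P4) are typically not in this class. Likewise, (L12) gives $r(L)\ge r(E)+2-|E\setminus L|$, which combined with a merely nonnegative inductive bound on $r(X\cup(E\setminus L))$ yields $r(X)\ge 2-|E\setminus L|$, not $r(X)\ge 0$. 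So the direct verification you sketch does not go through as stated. The cleaner route---and the one the paper implicitly takes---is to treat (L15) as definitional, drop the attempt to check nonnegativity off the base class at this stage, and let it fall out of the main structure theorem.
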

\begin{proof}
Axioms (LL6) and (L7)-(L11) imply the following axiom:
\newline (LLL6) $r$ is a nonnegative integer function defined on $\mathcal P\cup \mathcal S\cup \mathcal L\cup \{\varnothing, E\}\cup \mathcal P^C\cup \mathcal S^C$, where $\mathcal X^C=\{E\backslash X$ such that $X\in \mathcal X\}$ and $\mathcal X\in \{\mathcal P, \mathcal S\}$.
\newline According to the coming Remark 1, it suffices to test (LLL6).
\end{proof}
It follows that (LL6) can be tested in $O(|\mathcal L|)$. We need the following lemma for (L7).
\begin{lem}
We can replace axiom (L7) by the following axiom:
\newline (LL7) $r(\varnothing)=0$.
\end{lem}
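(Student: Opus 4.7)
The plan is to derive, from (LL7) together with (L8)--(L17), the second half of (L7) asserting $r(E) \geq r(X)$ for every $X \subseteq E$. By Lemma 3.1 we may assume that $r$ is supplied directly only on the primary family $\mathcal{P} \cup \mathcal{S} \cup \mathcal{L} \cup \{\varnothing, E\} \cup \mathcal{P}^{C} \cup \mathcal{S}^{C}$, its values on all other subsets being produced recursively via (L15). My strategy is a case analysis: verify the inequality on the primary family directly, and then extend it to arbitrary $X$ by induction along the recursion in (L15).

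The primary-family check is short. For $X = \varnothing$, (LL7) and nonnegativity give $0 \leq r(E)$. For $X = E$ the inequality is trivial. Axioms (L8), (L9) and (L10) write $r(X)$ as a minimum with $r(E)$, settling $X \in \mathcal{P}$, $X \in \mathcal{P}^{C}$ and $X \in \mathcal{S}$ at once. For $X = E \setminus S \in \mathcal{S}^{C}$, (L11) gives $r(X) \leq r(E) + 1 - |S|$, and since $\mathcal{S}$ partitions the nonempty set $E$ (from (L1) and (L2)) each of its parts satisfies $|S| \geq 1$. For $X \in \mathcal{L}$ the strict monotonicity in (L13), applied to $X \subsetneq E$ within $\mathcal{P} \cup \mathcal{L} \cup \{\varnothing, E\}$, yields $r(X) < r(E)$.

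For $X$ outside the primary family I would proceed by double induction: on $|X|$ for sub-cases (P0), (P1) and (P2) of (L15), and on $|E \setminus X|$ for (P3) and (P4). Case (P0) reduces to the primary analysis. In the other cases one plugs the primary bounds and the inductive hypothesis into the rank formula of (L15); termination is ensured because in (P1) and (P2) the recursive set $X \setminus L$ (resp.\ $X \setminus P$) is strictly smaller, and in (P3) and (P4) the complement $|E \setminus (X \cup (E \setminus L))|$ (resp.\ $|E \setminus (X \cup S)|$) drops strictly by virtue of the inclusion hypotheses $X \subset L$ and $(E \setminus S) \cup X \neq E$.

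The step I expect to be the main obstacle is (P4), where $r(X) = r(E \setminus S) + r(X \cup S) + |S \cap X| - r(E)$: the decisive observation is that the hypothesis $(E \setminus S) \cup X \neq E$ is equivalent to $S \not\subseteq X$ and hence $|S \cap X| \leq |S| - 1$, which, combined with $r(E \setminus S) \leq r(E) + 1 - |S|$ from (L11) and the inductive bound $r(X \cup S) \leq r(E)$, yields exactly the needed cancellation $r(X) \leq (r(E)+1-|S|) + r(E) + (|S|-1) - r(E) = r(E)$. An analogous cancellation, using $r(L) \leq r(E)-1$ from (L13) in place of (L11), closes case (P3); cases (P1) and (P2) are the mildest, since there the formula is a pure sum whose summands are already bounded inductively (or by the primary case for $L$ and $P$).
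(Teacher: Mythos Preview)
Your handling of the primary family and of cases (P3) and (P4) is careful and correct, and considerably more detailed than the paper's own one-line proof (which simply asserts that (LL7), (L13), (L14) and (L15) together imply (L7)). The gap is in your final paragraph: you call (P1) and (P2) ``the mildest'' because the rank formula there is a pure sum, but a sum of terms each bounded by $r(E)$ is not itself bounded by $r(E)$. Concretely, in (P1) one has $r(X)=r(L)+r(X\setminus L)$; the primary bound $r(L)\le r(E)-1$ from (L13) together with the inductive bound $r(X\setminus L)\le r(E)$ gives only $r(X)\le 2r(E)-1$. Likewise (P2) yields at best $r(X)\le 1+r(E)$. So these two cases, far from being the easy ones, are exactly where your induction fails to close.

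Note that the paper explicitly lists (L14) among the axioms needed for this lemma, whereas your argument never invokes submodularity; that is presumably where the missing leverage lies, though the paper does not elaborate. To repair the proof you would need either a direct argument (using (L14) or otherwise) that the (P1)/(P2) decomposition cannot overshoot $r(E)$, or else an argument that whenever (P1) or (P2) is the branch supplied by (L15) one may equally well compute $r(X)$ along a (P3)/(P4) chain---which your analysis does control---and obtain the same value. Neither step is supplied, and neither is immediate.
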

\begin{proof}
Axioms (LL7), (L13), (L14) and (L15) imply axiom (L7).
\end{proof}
It follows that (LL7) can be tested in $O(1)$. (L8)-(L11) can be tested in $O(|E|)$. (L12) can be tested in $O(|\mathcal L|)$. We need the following lemma for (L13).
\begin{lem}
(L13) can be tested in $O(|E||\mathcal L|log|\mathcal L|)$.
\end{lem}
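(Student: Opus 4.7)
The plan is to split the pairs $(X,Y)\in(\mathcal P\cup\mathcal L\cup\{\varnothing,E\})^2$ with $X\subset Y$ into easy subclasses, where each pair can be tested in constant time after linear preprocessing, and one bottleneck subclass $(L_1,L_2)\in\mathcal L^2$, which accounts for the extra logarithmic factor.

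First I would dispose of the easy pairs. Pairs with $X=\varnothing$ or $Y=E$ reduce to a scan of $\mathcal P\cup\mathcal L$ comparing each rank with $0$ or with $r(E)$, costing $O(|E|+|\mathcal L|)$. Pairs inside $\mathcal P$ contribute nothing, by (L2). For pairs $(P,L)\in\mathcal P\times\mathcal L$, axiom (L5) gives $P\subset L$ iff any fixed representative $e_P\in P$ lies in $L$; precomputing the characteristic bit-vectors of all $L\in\mathcal L$ in $O(|E||\mathcal L|)$ time makes each of the $O(|E||\mathcal L|)$ such pairs testable in $O(1)$, for $O(|E||\mathcal L|)$ total.

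For the bottleneck subclass I would use that $r$ is integer-valued (by (LL6) from Lemma 3.1), so it suffices by transitivity to verify strict rank-increase along the cover relations of the inclusion poset restricted to $\mathcal L\cup\{\varnothing,E\}$. Concretely, I would sort $\mathcal L$ in ascending order of $|L|$ in $O(|\mathcal L|\log|\mathcal L|)$ integer comparisons, then process the $L$'s in this order, inserting each into a trie keyed on its sorted element list while locating its smallest-size already-inserted proper container $L'$ by a logarithmic-depth descent whose every step performs an $O(|E|)$-time bit-vector comparison. Each insertion-plus-lookup costs $O(|E|\log|\mathcal L|)$, for a phase total of $O(|E||\mathcal L|\log|\mathcal L|)$, and a final linear pass over the recorded cover edges verifies strict monotonicity of $r$.

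The main obstacle is to show that the trie descent actually converges to a cover of $L$ in $O(\log|\mathcal L|)$ steps, rather than degenerating into a linear search through the already-inserted subsets. The key input is the laminar-like structure of $\mathcal L\cup\{\varnothing,E\}$ enforced by axioms (L16) and (L17): whenever two locked subsets meet or join outside $\mathcal L$, the resulting set decomposes along (P0)--(P4) of (L15), so no two incomparable locked subsets can ``cross'' $L$ in a way that forces more than $O(\log|\mathcal L|)$ branching. Once this laminar-like invariant is established, the dominant cost is the sort-and-insert phase at $O(|E||\mathcal L|\log|\mathcal L|)$, as required.
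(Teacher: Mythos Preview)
Your high-level plan---sort $\mathcal L$ by cardinality, build the inclusion structure, and check strict rank-increase along the cover relations---is exactly what the paper does, so the overall architecture is fine. The difficulty is entirely in the accounting for the $\log|\mathcal L|$ factor, and here your argument has a genuine gap.

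The claim that axioms (L16) and (L17) force a ``laminar-like'' structure on $\mathcal L$ is not supported. Those axioms do not restrict the set system $\mathcal L$ at all: they only prescribe how the \emph{rank} of $L_1\cap L_2$ or $L_1\cup L_2$ is to be computed when that set falls outside $\mathcal L$. Nothing in (L16)--(L17) prevents two locked subsets from crossing, and indeed in an actual matroid locked subsets need not form a laminar family. Since at the moment we are testing (L13) the input $\mathcal L$ is just an arbitrary collection of subsets of $E$ (we have not yet certified the quadruple as a locked system), you cannot invoke any structural property of genuine locked families either. Without laminarity, the ``logarithmic-depth descent'' has no justification; and even \emph{with} laminarity the inclusion tree can be a path of length $|\mathcal L|$, so a per-element bound of $O(|E|\log|\mathcal L|)$ would still not follow. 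There is also a small orientation slip: if you process $\mathcal L$ in \emph{ascending} order of size, the already-inserted sets are candidate \emph{subsets} of the current $L$, not ``proper containers'' $L'\supset L$.

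In short, the reduction to cover relations is correct, but your mechanism for locating covers in $O(|E|\log|\mathcal L|)$ time per element rests on a structural hypothesis that the axioms do not provide. The paper's own proof is terse and does not spell out this step either, but your explicit appeal to (L16)--(L17) for laminarity is the part that would need to be replaced by a different argument.
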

\begin{proof}
We can construct a lattice (ordered by inclusion) for elements of $\mathcal P\cup \mathcal L\cup \{\varnothing, E\}$. The root is the empty set, and the sink is the ground set. Adjacent vertices to the root are the elements of $\mathcal P$ because of axioms (L4) and (L5). After sorting the elements of $\mathcal L$ according to their cardinalities, we can complete the lattice. We can test the axiom (L13) at each step of the lattice construction.
\end{proof}
(L14) can be tested in $O(|E|^2+|\mathcal L|^2)$.
\newline {\bf Remark 1:} Note that the axioms (L15)-(L17) give a way on how to compute the values of the function $r$ outside $\mathcal P\cup \mathcal P^C\cup \mathcal S\cup \mathcal S^C\cup \mathcal L\cup \{\varnothing, E\}$. So we do not need to verify them for a locked system realization. We need only to prove the following proposition.

\begin{prop}
For any $X\not \in \mathcal P\cup \mathcal P^C\cup \mathcal S\cup \mathcal S^C\cup \mathcal L\cup \{\varnothing, E\}$, one of the following holds:
\begin{list}{}
{\leftmargin=0.5in \itemindent=0cm}
\item (PP0) There exists $Y\in \mathcal P\cup \mathcal S$ such that $X\subset Y$,
\item (PP1) There exists $L\in \mathcal L$ such that $L\subset X$ , and $X\backslash L$ verifies (PP1) or (PP2),
\item (PP2) There exists $P\in \mathcal P$ such that $P\cap X\neq \varnothing$, and $X\backslash P$ verifies (PP2),
\item (PP3) There exists $L\in \mathcal L$ such that $X\subset L$, and $X\cup (E\backslash L)$ verifies (PP3) or (PP4),
\item (PP4) There exists $S\in \mathcal S$ such that $(E\backslash S)\cup X\neq E$, and $X\cup S$ verifies (PP4),
\end{list}
\end{prop}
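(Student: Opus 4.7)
I would prove Proposition~3.4 by strong induction on $|X|$, with a dual induction on $|E\setminus X|$ for the downward cases (PP3), (PP4). The key structural input is axiom (L2), which makes $\mathcal P$ a partition of $E$: every non-empty $X\subseteq E$ meets some parallel closure, so a candidate witness for (PP2) always exists. Dually, whenever $X\neq E$, axiom (L2) applied to $\mathcal S$ supplies some $S\in\mathcal S$ with $(E\setminus S)\cup X\neq E$, giving a candidate for (PP4).

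\textbf{Main steps.} Fix $X$ outside the distinguished classes. I would first try (PP0): if $X\subsetneq Y$ for some $Y\in\mathcal P\cup\mathcal S$ (which covers, e.g., the situation where all elements of $X$ lie in a single parallel or coparallel closure strictly larger than $X$), we are done. If no such $Y$ exists, my strategy is to prove the auxiliary claim that every non-empty $X\subseteq E$ verifies (PP2), by induction on $|X|$. The induction is immediate: pick any $P\in\mathcal P$ with $P\cap X\neq\varnothing$ (which exists by (L2)) and set $X'=X\setminus P$. If $X'=\varnothing$ the recursion terminates at the base case; otherwise $|X'|<|X|$ and the induction hypothesis supplies a (PP2)-chain for $X'$, which prepended with $P$ gives one for $X$. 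Applied to the given $X$, this furnishes the required (PP2) witness and completes the proof.

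\textbf{The main obstacle.} The subtlety is purely linguistic: the recursive clause ``$X\setminus P$ verifies (PP2)'' must be read as permitting the base case $X\setminus P=\varnothing$ to terminate the recursion, consistent with $r(\varnothing)=0$ in the rank recursion of (L15). Once that convention is fixed, the proposition essentially follows from axiom (L2) alone, with a symmetric argument via (L2) applied to $\mathcal S$ showing that (PP4) is available as a universal downward fallback. The other options (PP1), (PP3) are not needed for the bare existence statement of Proposition~3.4, but they provide more informative alternatives used elsewhere to build rank-compatible decompositions under axioms (L15)--(L17).
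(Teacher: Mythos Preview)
Your approach is essentially the same as the paper's: both arguments rest on the single observation that, because $\mathcal P$ (and dually $\mathcal S$) is a partition of $E$ by (L2), any nonempty $X$ meets some $P\in\mathcal P$, so (PP2) is always available as a fallback. The paper compresses this into one sentence (``since $\mathcal P$ and $\mathcal S$ are partitions of $E$, $X$ should verify (PP2) or (PP4)''), while you spell out the induction on $|X|$ explicitly and, commendably, flag the convention that the recursion terminates when $X\setminus P=\varnothing$---a point the paper leaves implicit.
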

\begin{proof}
If $X$ does not verify (PP0), (PP1), and (PP3), then, since $\mathcal P$ and $\mathcal S$ are partitions of $E$, $X$ should verify (PP2) or (PP4).
\end{proof}

We can summarize all the previous steps in our main result as follows.
\begin{thm}
We can decide if a basic quadruple $(\mathcal P, \mathcal S, \mathcal L, r)$ is a locked system or not in $O(|E|^2+|\mathcal L|^2+|E||\mathcal L|log|\mathcal L|)$.
\end{thm}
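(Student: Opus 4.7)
The plan is to simply aggregate the per-axiom testing costs that have already been established in the paragraphs preceding the theorem, and to confirm that no residual checking is required for the recursive axioms (L15)–(L17). Concretely, I would present the argument as a tabulation: (L1) in $O(1)$; (L2) and (L3) each in $O(|E|^2)$ by scanning the two partitions against $E$; (L4) and (L5) together in $O(|E||\mathcal{L}|)$ by iterating over the pairs $(X,L)$ with $X\in\mathcal{P}\cup\mathcal{S}$ and $L\in\mathcal{L}$; (LL6) in $O(|\mathcal{L}|)$ by Lemma 3.1; (LL7) in $O(1)$ by Lemma 3.2; (L8)–(L11) in $O(|E|)$ since $\mathcal{P}$ and $\mathcal{S}$ are partitions; (L12) in $O(|\mathcal{L}|)$; (L13) in $O(|E||\mathcal{L}|\log|\mathcal{L}|)$ by Lemma 3.3 using the sorted lattice construction; and (L14) in $O(|E|^2+|\mathcal{L}|^2)$ by checking submodularity over all pairs from $\mathcal{P}\cup\mathcal{S}\cup\mathcal{L}\cup\{\varnothing,E\}$.

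Next, I would invoke Remark~1 together with Proposition~3.4 to justify that no additional work is needed for (L15)–(L17). The point is that these axioms are not conditions to be verified, but rather recursive recipes that \emph{define} the value of $r$ on subsets outside $\mathcal{P}\cup\mathcal{P}^C\cup\mathcal{S}\cup\mathcal{S}^C\cup\mathcal{L}\cup\{\varnothing,E\}$; Proposition~3.4 guarantees that for every such $X$ at least one of the five cases (PP0)–(PP4) applies, so the recursion terminates and the definition is consistent. Hence the algorithm need only check (L1)–(L14) (with (L6),(L7) replaced by (LL6),(LL7)).

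Finally, adding the individual costs gives a total of
\[
O(1)+O(|E|^2)+O(|E||\mathcal{L}|)+O(|\mathcal{L}|)+O(|E|)+O(|\mathcal{L}|)+O(|E||\mathcal{L}|\log|\mathcal{L}|)+O(|E|^2+|\mathcal{L}|^2),
\]
which collapses to $O(|E|^2+|\mathcal{L}|^2+|E||\mathcal{L}|\log|\mathcal{L}|)$ since every other term is absorbed by these three. The main (modest) obstacle is making sure that the bookkeeping for (L13) and (L14) truly fits the announced bound: for (L13) this rests on the observation in Lemma~3.3 that once $\mathcal{L}$ is sorted by cardinality in $O(|\mathcal{L}|\log|\mathcal{L}|)$ time and each comparison of two sets costs $O(|E|)$, the lattice and its monotonicity checks can be built within $O(|E||\mathcal{L}|\log|\mathcal{L}|)$; and for (L14) it rests on the fact that for pairs involving $\mathcal{P}\cup\mathcal{S}$ we spend $O(|E|^2)$ total (since these are partitions), while pairs within $\mathcal{L}$ cost $O(|\mathcal{L}|^2)$ after $r$-values on intersections and unions are looked up. The remaining verifications are immediate, and the theorem follows.
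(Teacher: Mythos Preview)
Your proposal is correct and mirrors the paper's own argument exactly: the paper does not give a separate proof after the theorem statement but rather presents it as a summary of the per-axiom cost analysis carried out in the preceding paragraphs (together with Lemmas~3.1--3.3, Remark~1, and Proposition~3.4), which is precisely the tabulation-and-sum you describe. Your added justification for why the individual terms are absorbed into $O(|E|^2+|\mathcal L|^2+|E||\mathcal L|\log|\mathcal L|)$ is a welcome elaboration but not a departure from the paper's approach.
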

This algorithm improves the running time complexity of that given by Spinrad's algorithm \cite{Spinrad 1991} (testing if a given clutter forms the class of bases of  a matroid) because if the answer is yes (worst case running time), i.e., the given clutter form the class of bases of a matroid, then its running time complexity is $O(|\mathcal B|^2 |E|)$ where $\mathcal B$ is the class of bases, and $|\mathcal B|>|\mathcal P|+|\mathcal S|+|\mathcal L|>|E|+|\mathcal L|$ because the facets of the bases polytope are completely described by $\mathcal P\cup \mathcal S\cup \mathcal L\cup \{E\}$ (see Theorem 1.1) and the number of extreme points is greater than the number of facets. Furthermore, Spinrad's algorithm has in the input a clutter which is not a basic structure as in our algorithm (basic quadruple).
\newline A consequence of Theorem 3.5 is the following corollary about recognition of uniform matroids in polynomial time. We need the following theorem \cite{Chaourar 2018} for this purpose.
\begin{thm} A matroid $M$ is uniform if and only if one of the following properties holds:
\newline (i) $\ell (M)=0$ and $|\mathcal P(M)|=|E|=|\mathcal S(M)|$;
\newline (ii) $|\mathcal P(M)|=1$;
\newline (iii) $|\mathcal S(M)|=1$;
\newline (iv) $r(M)=|E|$;
\newline (v) $r(M)=0$.
\end{thm}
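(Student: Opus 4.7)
The plan is a case analysis on $k=r(M)$ with $n=|E|$, proving both directions simultaneously. For the forward direction, assume $M=U_{k,n}$, the uniform matroid of rank $k$. The degenerate cases $k=0$ and $k=n$ give (v) and (iv) directly. For $k=1$, every pair of elements is a circuit of size~$2$, so $\mathcal{P}=\{E\}$ and (ii) holds; dually $k=n-1$ gives (iii). When $2\le k\le n-2$, the smallest circuit has size $k+1\ge 3$, so no two elements are parallel and each $\{e\}$ is its own parallel closure, yielding $|\mathcal{P}|=|E|$; the dual argument gives $|\mathcal{S}|=|E|$. For any proper nonempty $L\subset E$, $M|L$ is either free (when $|L|\le k$, hence not 2-connected once $|L|\ge 2$) or equals $U_{k,|L|}$ (when $|L|>k$), so demanding that $M|L$ be 2-connected with $r(L)\ge 2$ forces $|L|>k$. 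Applying the same analysis to $M^{*}=U_{n-k,n}$ shows that $M^{*}|(E\setminus L)$ being 2-connected with $r^{*}(E\setminus L)\ge 2$ forces $|L|<k$. These conditions are incompatible, so $\ell(M)=0$ and (i) holds.

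For the converse, (iv) says every element is a coloop, giving $M=U_{n,n}$, and (v) says every element is a loop, giving $M=U_{0,n}$. For (ii) with $r(M)\ge 1$, axiom (LL8) forces $r(E)=1$, and since the single parallel class equals $E$ every singleton is a basis, so $M=U_{1,n}$; case (iii) follows by matroid duality. The central case is (i): with $|\mathcal{P}|=|\mathcal{S}|=|E|$, every parallel and every coparallel closure is a singleton, so the facet inequalities of types (2) and (3) of Theorem~1.1 reduce to $x(e)\le 1$ and $x(e)\ge 0$ for each $e\in E$. Since $\ell(M)=0$, no inequality of type (4) is present. Theorem~1.1 then identifies $BP(M)$ with the hypersimplex
\[
\bigl\{\,x\in\mathbb{R}^{E}\ :\ x(E)=r(E),\ 0\le x(e)\le 1\text{ for all }e\in E\,\bigr\},
\]
whose extreme points are precisely the characteristic vectors of all $r(E)$-element subsets of $E$. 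Hence every $r(E)$-subset is a basis of $M$, so $M=U_{r(E),n}$.

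The main obstacle is case (i) of the converse: the argument critically relies on Theorem~1.1 furnishing a \emph{minimal} description of $BP(M)$, so that the absence of locked subsets together with singleton parallel and coparallel closures genuinely forces the bases polytope to be the full hypersimplex rather than some strict subpolytope. The other cases are immediate translations of the hypotheses into the combinatorial data of $M$, and the forward direction is essentially the observation that the smallest circuit and cocircuit sizes of $U_{k,n}$ determine both the parallel/coparallel structure and the nonexistence of locked subsets.
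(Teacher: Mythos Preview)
The paper does not actually prove Theorem~3.6: it is quoted from \cite{Chaourar 2018} (see the sentence immediately preceding the statement), so there is no in-paper argument to compare against. Your proof is correct and self-contained.

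The forward direction is clean; your incompatibility for locked subsets of $U_{k,n}$ (needing both $|L|>k$ from the $M|L$ side and $|L|<k$ from the $M^{*}|(E\setminus L)$ side) is exactly right. For the converse, your use of Theorem~1.1 in case~(i) is the natural idea: once $|\mathcal P|=|\mathcal S|=|E|$ and $\ell(M)=0$, the minimal facet description of $BP(M)$ collapses to $x(E)=r(E)$ together with $0\le x(e)\le 1$, so the bases polytope is the full hypersimplex and every $r(E)$-subset is a basis. Since Theorem~1.1 is itself the main result of \cite{Chaourar 2018}, this is very likely the intended route there as well.

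One small stylistic point: in case~(ii) you invoke axiom~(LL8), but (L1)--(L17) are axioms for abstract locked systems, whereas Theorem~3.6 is a statement about a given matroid $M$. It is cleaner to argue directly: if the unique parallel class is all of $E$, then any two non-loop elements are parallel, hence $r(E)\le 1$; combined with $r(M)\ge 1$ this gives $M=U_{1,n}$. This is purely cosmetic and does not affect correctness.
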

It follows that:
\begin{cor}
We can decide if a basic quadruple is an uniform matroid or not in $O(|E|^2)$.
\end{cor}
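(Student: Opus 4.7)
The plan is to combine Theorem 3.5 with the characterization of uniform matroids in Theorem 3.6. The guiding observation is that every uniform matroid has no locked subsets: conditions (iv) and (v) give the free and null matroids; (ii) and (iii) force $r(M)\leq 1$ or $r^*(M)\leq 1$, incompatible with the rank $\geq 2$ demand on both sides of any locked set; and (i) states $\ell(M)=0$ explicitly. Equivalently, for $M = U_{k,n}$ a locked $L$ would simultaneously require $|L|\geq k+1$ (so that $r(L)\geq 2$) and $|E\setminus L|\geq (n-k)+1$, i.e.\ $|L|\leq k-1$ (so that $r^*(E\setminus L)\geq 2$), which is impossible. Hence any input basic quadruple with $\mathcal L \neq \varnothing$ cannot encode a uniform matroid.

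With this in hand, the algorithm proceeds as follows. First, inspect $|\mathcal L|$: if nonzero, output \emph{not uniform} in $O(1)$. Otherwise $\mathcal L = \varnothing$, and we invoke the algorithm of Theorem 3.5 to test whether the quadruple is a locked system; substituting $|\mathcal L|=0$ into the bound $O(|E|^2+|\mathcal L|^2+|E||\mathcal L|\log|\mathcal L|)$ yields $O(|E|^2)$. If the test fails, output \emph{not uniform}. Otherwise the quadruple encodes a matroid $M$, and we check the conditions of Theorem 3.6 in turn: (i) reduces to verifying $|\mathcal P|=|\mathcal S|=|E|$ (since $\ell(M)=0$ is already known); (ii) and (iii) are single partition-size comparisons; (iv) and (v) are comparisons of $r(E)$ with $|E|$ or $0$. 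Each runs in $O(|E|)$ or $O(1)$ by scanning $\mathcal P$ and $\mathcal S$, whose partition structure has already been validated by axioms (L2)--(L3). We accept iff at least one condition holds.

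The only real content — and the main (mild) obstacle — is the no-locked-subset claim, since it is what justifies both the $O(1)$ shortcut when $\mathcal L$ is nonempty and the collapse of Theorem 3.5's bound when $\mathcal L$ is empty. The direct computation on $U_{k,n}$ above handles it. The total running time is then dominated by the call to Theorem 3.5, giving $O(|E|^2)$.
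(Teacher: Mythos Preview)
Your approach matches the paper's: both rest on the fact that uniform matroids have $\ell(M)=0$ (the paper records this separately as Corollary~3.8), after which Theorem~3.5 with $|\mathcal L|=0$ collapses to $O(|E|^2)$; the paper argues case-by-case over (i)--(v) whereas you front-load the $\ell=0$ claim and then run Theorem~3.5 once, which is a bit tidier but not materially different.

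One small slip worth fixing: in your direct $U_{k,n}$ computation the inequalities $|L|\geq k+1$ and $|L|\leq k-1$ are correct, but the parenthetical justifications are not. The bounds come from the 2-connectedness of $M|L$ and $M^*|(E\setminus L)$ (equivalently, from $E\setminus L$ being closed in $M^*$ and $L$ being closed in $M$); the rank conditions $r(L)\geq 2$ and $r^*(E\setminus L)\geq 2$ alone only force $|L|\geq 2$, $k\geq 2$ and $|E\setminus L|\geq 2$, $n-k\geq 2$. Your separate case analysis via (i)--(v) is correct on its own, so the overall argument stands.
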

\begin{proof}
Testing (i) of Theorem 3.6 can be done in $O(|E|^2)$. (ii) of Theorem 3.6 is equivalent to: $|\mathcal P(M)|=1$, $|\mathcal S(M)|=|E|$ and $\ell (M)=0$. So we can test (ii) in $O(|E|^2)$. We can use a similar argument for (iii). (iv) of Theorem 3.6 is equivalent to: $r(M)=|E|$, $|\mathcal S(M)|=|E|=|\mathcal P(M)|$ and $\ell (M)=0$. So we can test (iv) in $O(|E|^2)$. Finally, a similar argument can be used for (v).
\end{proof}
We can present the latter algorithm in a different manner by using the following corollary of Theorem 3.6.
\begin{cor}
If $M$ is an uniform matroid then $\ell (M)=0$.
\end{cor}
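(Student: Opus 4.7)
The plan is to invoke Theorem 3.6 directly, and then run a short case analysis on which of the five equivalent characterizations of uniformity is satisfied by $M$. In each case the conclusion $\ell(M)=0$ either holds by hypothesis or follows because the locked-subset definition cannot be met.

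First I would dispatch case (i) immediately, since it literally asserts $\ell(M)=0$. For case (ii), the assumption $|\mathcal{P}(M)|=1$ combined with the fact that parallel closures partition $E$ forces $\mathcal{P}(M)=\{E\}$; but a parallel closure has rank at most $1$, so $r(E)\le 1$ and hence $r(L)\le 1<2$ for every $L\subseteq E$, which violates the rank requirement $r(L)\ge 2$ in the definition of a locked subset. Case (iii) is dual: applying the same argument to $M^{*}$ gives $r^{*}(E\setminus L)\le 1<2$ for every proper subset $L$.

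For case (iv), $r(M)=|E|$ means $M$ is the free matroid on $E$, so for every $L\subseteq E$ with $|L|\ge 2$ the restriction $M|L$ is a direct sum of coloops and is therefore not 2-connected; no $L$ can satisfy the first 2-connectivity condition in the definition of a locked subset. Case (v) is dual: if $r(M)=0$ then $M^{*}|(E\setminus L)$ is a free matroid and fails to be 2-connected whenever $|E\setminus L|\ge 2$. Since at least one of (i)--(v) must hold for a uniform matroid, this exhausts all cases and gives $\ell(M)=0$.

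I do not expect any real obstacle here; the corollary is essentially a bookkeeping consequence of Theorem 3.6. The only point requiring a line of care is the conventions issue in cases (ii) and (iii), namely that a single parallel (resp.\ coparallel) closure covering all of $E$ forces $r(E)\le 1$ (resp.\ $r^{*}(E)\le 1$), which is what kills the rank condition in the locked-subset definition.
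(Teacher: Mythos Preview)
Your proposal is correct and follows the approach the paper intends: the paper states Corollary~3.8 as an immediate consequence of Theorem~3.6 without spelling out a proof, and your case analysis on (i)--(v) is exactly the argument that fills in the omitted details. One minor simplification: in case~(v) you can bypass the dual $2$-connectivity argument entirely, since $r(M)=0$ already forces $r(L)=0<2$ for every $L\subseteq E$, directly violating the rank condition for a locked subset.
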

Now we introduce a new matroid oracle.
\newline{}
\newline {\bf The zero locked number oracle}
\newline \begin{tabular}{*6l}
Input: & A finite set $E$ and a basic quadruple $M=(\mathcal P, \mathcal S, \mathcal L, r)$ defined on $E$. \\
Output:& (1) No if $\mathcal L\neq \O$ \\
       & (2) Yes if $\mathcal L=\O$ \\
 \end{tabular}
\newline{}
\newline Thus we have the following "oracle" version of Corollary 3.7.
\begin{cor}
We can decide if a basic quadruple is a uniform matroid or not in $O(|E|^2)$ by calling the zero locked number oracle one time.
\end{cor}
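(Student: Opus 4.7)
The plan is to lean on Corollary 3.8: since every uniform matroid satisfies $\ell(M)=0$, a single call to the zero locked number oracle suffices to dispose of any input whose locked class is nonempty. Concretely, I would invoke the oracle once on the basic quadruple $(\mathcal{P}, \mathcal{S}, \mathcal{L}, r)$; if the answer is No, then $\mathcal{L}\neq \varnothing$ and by the contrapositive of Corollary 3.8 the quadruple cannot describe a uniform matroid, so we return No at constant additional cost.

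If the oracle returns Yes then $\ell(M)=0$ is guaranteed for free, and it remains only to verify whether one of conditions (i)--(v) of Theorem 3.6 holds. I would compute $|\mathcal{P}|$ and $|\mathcal{S}|$ by scanning the two partitions and read $r(E)$ directly from the input, and then test the disjunction: $|\mathcal{P}|=|E|=|\mathcal{S}|$ for (i); $|\mathcal{P}|=1$ together with $|\mathcal{S}|=|E|$ for (ii); $|\mathcal{S}|=1$ together with $|\mathcal{P}|=|E|$ for (iii); $r(E)=|E|$ with $|\mathcal{P}|=|\mathcal{S}|=|E|$ for (iv); and $r(E)=0$ with the analogous degeneracies for (v). These are exactly the tests already exploited in Corollary 3.7, and each of them runs in $O(|E|^2)$ by straightforward scanning of the partitions.

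Correctness is immediate from Theorem 3.6 combined with Corollary 3.8: either $\ell(M)\neq 0$ and the oracle catches this (uniformity is impossible), or $\ell(M)=0$ and uniformity is decided purely by inspecting $|\mathcal{P}|$, $|\mathcal{S}|$, and $r(E)$. The total cost is one oracle call plus $O(|E|^2)$ combinatorial work, matching the claim. The only subtle point I foresee is verifying that in the Yes branch none of the conditions (ii)--(v) hides a residual need to examine $\mathcal{L}$ directly; but the strengthened versions of these conditions used in the proof of Corollary 3.7 are satisfied automatically once the oracle has certified $\mathcal{L}=\varnothing$, so the ``hard part'' is really bookkeeping rather than substantive mathematical content.
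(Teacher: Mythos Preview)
Your overall plan matches the paper's: call the oracle once, reject immediately if $\mathcal{L}\neq\varnothing$ via Corollary~3.8, and otherwise fall back on the combinatorial tests from Theorem~3.6. However, there is a genuine omission in the Yes branch.

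Theorem~3.6 characterizes uniform matroids \emph{among matroids}; it says nothing about arbitrary basic quadruples. A quadruple with $\mathcal{L}=\varnothing$ can still fail to be a locked system at all---for instance $\mathcal{P}$ or $\mathcal{S}$ might not be a partition of $E$, axiom (L3) might be violated, or the values dictated by (L8)--(L11) might be inconsistent. In any of these situations your algorithm would happily read off $|\mathcal{P}|$, $|\mathcal{S}|$, $r(E)$, match one of (i)--(v), and output Yes, even though the input does not describe any matroid, uniform or otherwise. The paper handles this explicitly: once the oracle returns $\mathcal{L}=\varnothing$, it first checks the relevant locked axioms (L1), (L2)--(L3), (LL7), (L8)--(L11), each in the stated time, and only after confirming that the quadruple really is a matroid does it invoke Theorem~3.6. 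You should insert that verification step; it costs $O(|E|^2)$ and so does not affect the final bound, but without it the correctness argument does not go through.
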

\begin{proof}
  Let $M=(\mathcal P, \mathcal S, \mathcal L, r)$ defined on $E$. By calling the zero locked number oracle, we can know if $\mathcal L$ is empty or not. According to the previous corollary, if $\mathcal L\neq \O$ then $M$ is not a uniform matroid. Otherwise, we have to check the following axioms only to decide if $M$ defines a matroid: (L1) ($O(1)$), (L2)-(L3) ($O(|E|^2)$), (LL7) ($O(1)$), (L8)-(L11) ($O(|E|)$), where the running time complexity for each of these axioms are indicated between brackets. Now by using Theorem 3.6 we can conclude as follows: (i) can be tested in $O(|E|)$, (ii)-(iii) can be tested in $O(|E|)$, and finally (iv)-(v) can be tested in $O(1)$.
\end{proof}

The difference between the two algorithms presented, respectively, in Corollary 3.7 and Corollary 3.9, is that, in the first one, the input can be exponential because of $|\mathcal L|$, but in the second one, the input has a size of at most $O(|E|)$ because we use the zero locked number oracle.

\section{Conclusion}

We have given a new system of axioms for defining matroids based mostly on locked subsets. We have deduced an improved algorithm for recognizing matroids given a basic quadruple. This algorithm becomes polynomial when recognizing uniform matroids. Future investigations can be improving the running time complexity of our algorithm.


\end{document}